\documentclass[lettersize,journal]{IEEEtran}
\usepackage{amsmath,amssymb,amsfonts}
\usepackage{algorithmic}
\usepackage{algorithm}
\usepackage{array}
\usepackage[caption=false,font=footnotesize,labelfont=rm,textfont=rm]{subfig}
\usepackage{textcomp}
\usepackage{stfloats}
\usepackage{url}
\usepackage{verbatim}
\usepackage{graphicx}
\usepackage{cite}
\usepackage{bm}
\usepackage{booktabs}
\usepackage{soul, color, xcolor}
\usepackage{diagbox}
\sethlcolor{yellow}
\usepackage{multirow}
\usepackage{booktabs}
\usepackage{siunitx}

\newtheorem{theorem}{\textbf{Theorem}}
\newtheorem{proof}{\textbf{Proof}}

\hyphenation{op-tical net-works semi-conduc-tor IEEE-Xplore}
% updated with editorial comments 8/9/2021
\usepackage{tikz,xcolor,hyperref}% Make Orcid icon
\hypersetup{hidelinks,
	colorlinks=true,
	allcolors=black,
	pdfstartview=Fit,
	breaklinks=true}
\definecolor{lime}{HTML}{A6CE39}
\DeclareRobustCommand{\orcidicon}{%
    \begin{tikzpicture}
    \draw[lime, fill=lime] (0,0) 
    circle [radius=0.16] 
    node[white] {{\fontfamily{qag}\selectfont \tiny ID}};    \draw[white, fill=white] (-0.0625,0.095) 
    circle [radius=0.007];    \end{tikzpicture}
    \hspace{-2mm}}
\foreach \x in {A, ..., Z}{%
    \expandafter\xdef\csname orcid\x\endcsname{\noexpand\href{https://orcid.org/\csname orcidauthor\x\endcsname}{\noexpand\orcidicon}}
    }

\soulregister\cite7
\soulregister\ref7
\soulregister\eqref7
\raggedbottom
\begin{document}

\title{Parallelizable Complex Neural Dynamics Models for PMSM Temperature Estimation with Hardware Acceleration}

\author{Xinyuan Liao\orcidA{}, 
\IEEEmembership{Graduate Student Member, IEEE}, 
Shaowei Chen\orcidB{}, \IEEEmembership{Member, IEEE}, 
and Shuai Zhao\orcidC{}, \IEEEmembership{Senior Member, IEEE} 
\thanks{Xinyuan Liao is with the School of Electronics and Information, Northwestern Polytechnical University, Xi’an 710072, China, and also with the Department of Electrical and Electronic Engineering, The Hong Kong Polytechnic University, Hong Kong (e-mail: liaoxinyuan@mail.nwpu.edu.cn; xin-yuan.liao@connect.poylu.hk).}

\thanks{Shaowei Chen is with the School of Electronics and Information, Northwestern Polytechnical University, Xi’an 710072, China (e-mail: cgong@nwpu.edu.cn).}

\thanks{Shuai Zhao is with the AAU Energy, Aalborg University, Aalborg 9220, Denmark (e-mail: szh@energy.aau.dk).}}

% The paper headers
\markboth{Submitted to IEEE}%
{Shell \MakeLowercase{\textit{Liao et al.}}: Parallelizable Complex Neural Dynamics Models for PMSM Temperature Estimation with Hardware Acceleration}

\IEEEpubid{0000--0000/00\$00.00~\copyright~2021 IEEE}
% Remember, if you use this you must call \IEEEpubidadjcol in the second
% column for its text to clear the IEEEpubid mark.

\maketitle
\begin{abstract}
Accurate and efficient thermal dynamics models of permanent magnet synchronous motors are vital to efficient thermal management strategies. Physics-informed methods combine model-based and data-driven methods, offering greater flexibility than model-based methods and superior explainability compared to data-driven methods. Nonetheless, there are still challenges in balancing real-time performance, estimation accuracy, and explainability. This paper presents a hardware-efficient complex neural dynamics model achieved through the linear decoupling, diagonalization, and reparameterization of the state-space model, introducing a novel paradigm for the physics-informed method that offers high explainability and accuracy in electric motor temperature estimation tasks. We validate this physics-informed method on an NVIDIA A800 GPU using the JAX machine learning framework, parallel prefix sum algorithm, and Compute Unified Device Architecture (CUDA) platform. We demonstrate its superior estimation accuracy and parallelizable hardware acceleration capabilities through experimental evaluation on a real electric motor.
\end{abstract}

\begin{IEEEkeywords}
System Thermal Dynamics, State-Space Models, Control-Oriented Modeling, Physics-Informed Machine Learning, Parallel Computing.
\end{IEEEkeywords}

\section{Introduction}
Permanent magnet synchronous motors (PMSMs) are widely used in various industrial applications, including electric vehicles (EVs) and renewable energy power generation \cite{car, jestie0}, due to their outstanding power output and reliability. However, their maximum torque, power density, and overall health are highly sensitive to temperature variations \cite{review}. An effective thermal management system is essential for enhancing both power density and operational lifespan \cite{failure, jestie1, jestie2}. Despite its importance, thermal management in PMSMs presents significant challenges. The complex structure of the stator and rotor, along with high-speed rotation, makes it difficult to instrument physical temperature sensors (e.g., PT100) in critical areas without impacting system performance. As a result, indirect temperature measurement through a motor thermal dynamics model is often necessary for real-time monitoring and control. Moreover, thermal control strategies such as model predictive control (MPC), which serve as the foundation of many thermal management systems \cite{MPC}, rely heavily on accurate predictive models, referred to here as motor thermal dynamics models. Achieving a balance between real-time performance and estimation accuracy remains a long-standing challenge in the industry \cite{benchmark}.

The motor thermal dynamics modeling methods primarily encompass model-based and data-driven methods. Model-based methods typically leverage thermodynamic or mathematical principles to construct a thermodynamic model of PMSM for analyzing its thermal characteristics \cite{review}. Excellent real-time performance is the main advantage of model-based methods, while it frequently raises concerns regarding their accuracy and efficiency. Their complex modeling procedure is one of the major challenges in practical implementation. Data-driven methods mainly focus on artificial intelligence (AI) technology \cite{zhao2020overview, GNN, motor,pcim}, without extensive expert knowledge. These methods generally rely on black-box architectures, which impedes industrial implementation. In industrial scenarios, the trustworthiness of a prediction model is more important than the prediction accuracy alone. Furthermore, large model size is required for data-driven models to effectively capture complex data patterns, resulting in poor real-time performance. Given the limited computing capability of computing units, it is challenging to implement the data-driven method in the field.

\IEEEpubidadjcol

Physics-informed methods as a compromise between model-based and data-driven methods have inherited both advantages \cite{zhao2022parameter}. Physical priors can regularize the data-driven models' output. For instance, Liao et al. \cite{liao} utilized physics-informed neural networks (PINN) to constrain the output of the surrogate model to comply with the underlying dynamic principles in lifetime evolution, having augmented the explainability and performance of data-driven models. Moreover, physical priors can guide the design of the foundational structure of data-driven models. Wallscheid et al. \cite{LPTN} proposed the lumped-parameter thermal network (LPTN) by merging the heat conduction equation with empirical data-based model parameter identification methods. Kirchg{\"a}ssner et al. \cite{TNN} further enhanced estimation accuracy and model versatility by integrating the heat conduction equation with the state-space model (SSM). Liao et al. \cite{liao2025neural} embed the perron-frobenius theorem into the SSM, to ensure the system stability of the data-driven method. J{\'a}n et al. \cite{building} utilize the overall heat transfer coefficient of the building to constrain the eigenvalues of SSM. 

Although physics-informed methods inherit the model-based method's high explainability and the data-driven method's flexibility, they also inherit the drawbacks of the above two methods. Due to physics properties being different between different equipment, industrial implementation needs a certain extent of experience to establish a physics-informed model properly, which often brings limitations for applying the physics-informed methods. Moreover, different types of prior knowledge often yield challenges for knowledge integration within a model \cite{phm}. In addition, the real-time performance of physics-informed methods is still drastically behind the white-box model-based methods, which is crucial in practical implementation. With the advancement of computing devices, such as graphics processing units (GPUs), developing a hardware-efficient method holds significant potential for large-scale industrial applications. 

Inspired by neural dynamics models \cite{liao2025neural} and linear RNNs \cite{deepmind}, this paper proposed a parallelizable complex neural dynamics model (complexNDM) based on SSM, for system modeling with temperature estimation as a case study. In the complex domain, we have archived embedding various general physical priors (stability and system oscillation frequency) into the data-driven model structure and established a bridge between these physical priors and the property of smooth evolution on the non-chaos system. Moreover, due to the diagonalizability of matrices in the complex domain, the method is parallelizable and we achieved hardware parallel acceleration of complexNDM on the compute unified device architecture (CUDA) platform. Finally, the proposed method is applied to PMSM temperature estimating, which shows that the method’s performance is superior to the existing methods in estimating the temperature of the motor system. The main contributions are summarized as follows.
\begin{enumerate}
    \item A parallelizable complex neural dynamics model is proposed for the PMSM temperature estimation, achieving a high estimation accuracy with a compact model size. The code details accompanying the paper are open-sourced on GitHub\footnote{[Online]. https://github.com/XinyuanLiao/ComplexNDM}.
    \item A more transparent physics-informed machine learning framework that integrates various physical prior information is proposed, demonstrating the method for stability, system oscillation frequency, and smooth evolution characteristics embedding.
    \item The proposed framework is parallelizable and can be effectively accelerated by GPU. The time complexity is reduced from $O(N)$ to $O(log_2N)$ by parallel implementation.
\end{enumerate}

The remainder of this paper is organized as follows. Section \ref{back} introduces the basic framework and method modules involved in this paper. Section \ref{parameter} introduces the methods of diagonalization and physical information embedding. Section \ref{anal} provides a detailed analysis of parameterized diagonal state spaces. Section \ref{exp} introduces the experimental verification and analyzes the model estimation performance, hardware acceleration, and model eigenvalues. Section \ref{cons} summarizes the conclusions of the paper.

\section{Methodology}\label{back}
\subsection{State-Space Model}

State-space models (SSMs) have been extensively applied in modeling physical systems due to their straightforward structure and properties. It represents a physical system as a set of inputs, outputs, and internal states. The relationship between inputs, outputs, and internal states can be described by first-order differential equations. For a discrete time-invariant system, its SSM is expressed as
\begin{equation}
\begin{aligned}
x_{t}&=A x_{t-1}+B u_{t}, \\
y_t&=C x_t+D u_t,
\end{aligned}
\end{equation}
where $A$ is the state matrix, $B$ is the input matrix, $C$ is the output matrix, and $D$ is the feedforward matrix. $D$ denotes the residual connection linking the input to the output. In general, physical systems exhibit a time lag between inputs and outputs, so $D$ is often assumed to be $0$.

The complexity of industrial systems often exceeds the capabilities of basic linear SSMs, which rely solely on linear transformations, leading to inaccuracies in modeling. Integrating nonlinear modules into the state-space model significantly enhances its expressive capacity, enabling accurate representation of these intricate systems.

\subsection{Structured Linear Neural Dynamics Model}
Modeling dynamic systems by neural networks can be referred to as the neural dynamic model \cite{building}, with its representation form as
\begin{equation}
\begin{aligned}
& x_0=f_0\left(\left[y_{1-n} , \ldots , y_0\right]\right), \\
& x_t=f_x\left(x_{t-1}, u_t\right), \\
& y_t=f_y\left(x_t\right),
\end{aligned}
\end{equation}
where $f_0$, $f_x$, and $f_y$ are nonlinear mappings. This black-box representation draws inspiration from recurrent neural networks (RNNs), where inputs and states exhibit coupling among each other. However, recent studies \cite{deepSSM,deepmind} have indicated that RNNs can be more precise by additive decoupling the input from state transition utilizing the presentation of SSM. Essentially, the SSM represents a linear variant of RNNs. These linear RNN-based models \cite{deepmind, deepSSM} not only overcome the limitations of vanilla RNN serial computations but also exhibit significant advantages when handling longer series. Therefore, Liao et al. \cite{liao2025neural} decoupled and linearized the black-box neural dynamics model into a  gray-box form as
\begin{equation}
\begin{aligned}
x_0 &=f_0\left(\left[y_{1-n} , \ldots , y_0\right]\right), \\
x_t &=A x_{t-1}+f_u\left(u_t\right), \\
y_t &=C x_t.
\label{eq: dynamics}
\end{aligned}
\end{equation}
where $A$ is the state matrix, $C$ is the output matrix, and $f_u$ nonlinear maps the control input into the state space. The linear update method of the neural dynamics model naturally has the potential for parallel computing, as
\begin{equation}
\begin{aligned}
x_1 &=A x_0+f_u(u_1), \\
x_2 &=A\left(A x_0+f_u(u_1)\right)+f_u(u_2), \\
&=A^2 x_0+\left(A f_u(u_1)+f_u(u_2)\right), \\
x_k &=A^k x_0+ \sum_{j=1}^kA^{k-j} f_u(u_{j}).
\label{eq: parallel}
\end{aligned}
\end{equation}

\begin{figure}[t]
    \centering
    \includegraphics[width=\linewidth]{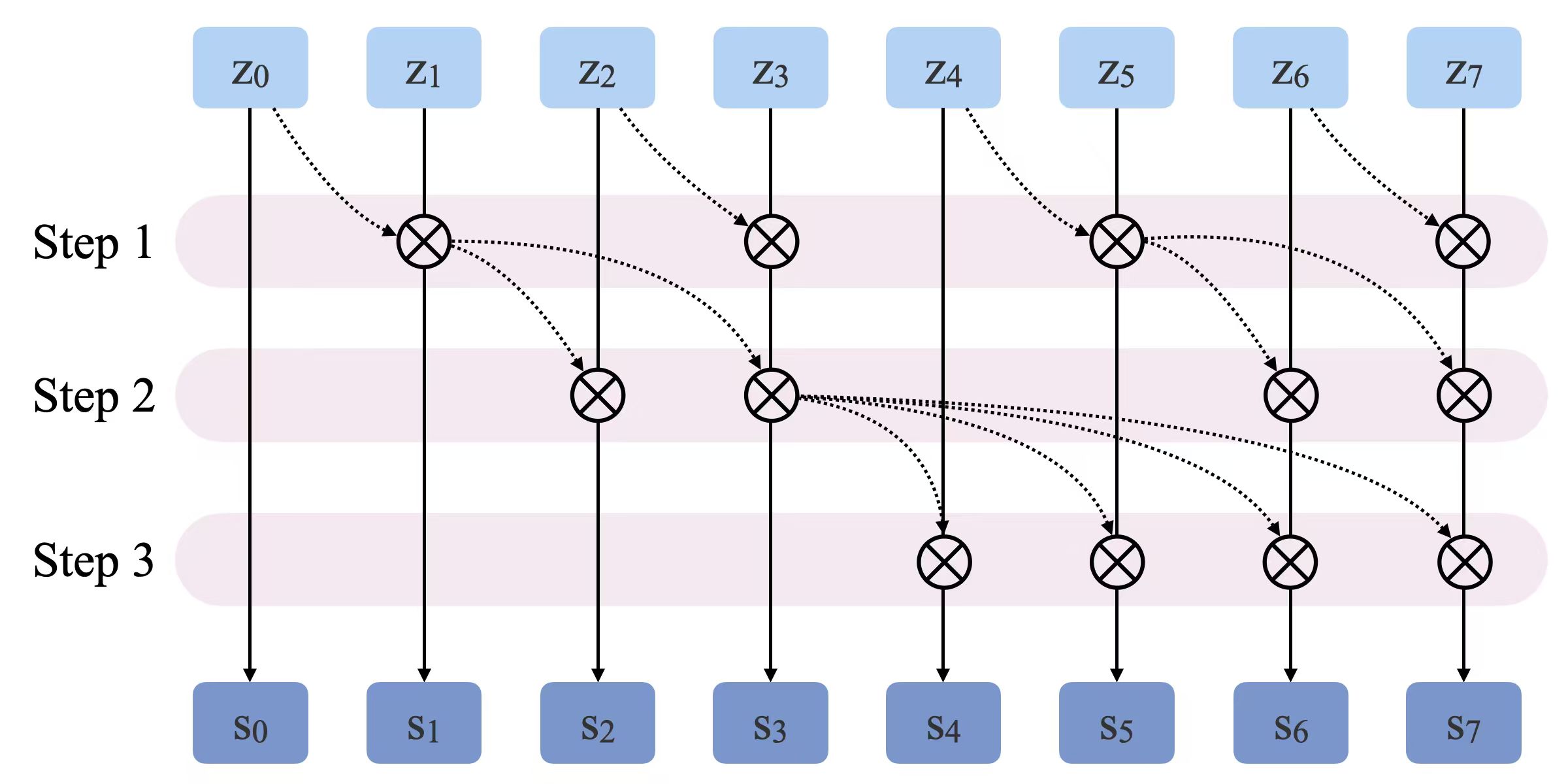}
    \caption{Pipeline of the parallel prefix sum algorithm, where $z_i$ is the input and $s_i$ denotes the $\sum_{j=0}^iz_i$ when the binary operator is plus.}
    \label{fig: prefix}
\end{figure}

As in (\ref{eq: parallel}), the hidden state $x_k$ does not depend on the state $x_{k-1}$ at the previous moment directly. With an initial state $x_0$ and a sequence of control inputs, the SSM-based neural dynamics model can compute the state at any moment in parallel. It is worth noting that although the state transfer process is assumed to be a linear process, the calculation of the hidden state $x_k$ still involves a large number of nonlinear operators $f_u$, which are generally constructed as a nonlinear neural network. Therefore, the assumption of linear state transition in this case does not violate the principles of real-world nonlinear systems.
\begin{figure*}[t]
  \centering
    \includegraphics[width=\linewidth]{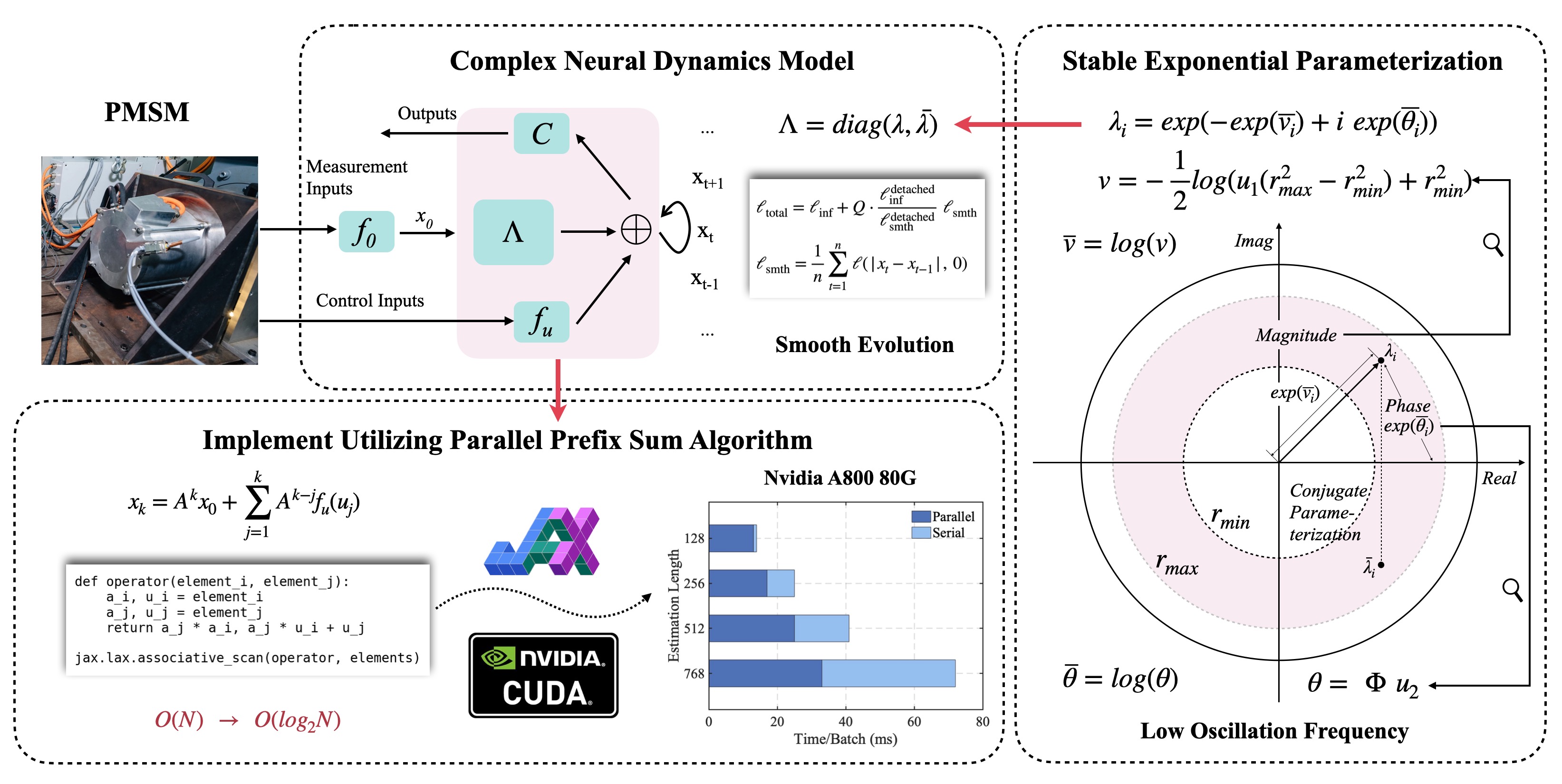}
    \caption{Structure of the Complex Neural Dynamics Model (complexNDM). Given the PMSM's \cite{motor} previous temperature measurements to estimate the initial state and parallel compute system outputs by control inputs. The loss function contains a priori bias towards a smooth evolution of non-chaotic systems. Parallel computing is based on the parallel prefix sum algorithm (Scan), which can be implemented in GPUs by CUDA. The complex diagonal parameterization of the state matrix incorporates the physical priori information concerning system stability and low oscillation frequency.}
    \label{fig: frame}
\end{figure*}
\subsection{Parallel Prefix Sum Algorithm}

In deep learning, parallelization methods are divided into data parallelism and model parallelism. Data parallelism is the most common parallelization strategy supported by almost all deep learning frameworks, which accelerates the training and inference process by splitting the data into multiple small batches and processing them simultaneously on multiple computing nodes. Model parallelism splits the model into different parts and calculates them in parallel on different computing resources. Data parallelism can accelerate any model based on ANNs, while model parallelism requires the model to have a specific structure, such as the Transformer. Vanilla RNN cannot use model parallelism for training and inference acceleration because of its nonlinear recursive structure, but the assumption of linear state transfer allows RNN to be accelerated with the help of model parallelism. 

The parallel prefix sum algorithm is a typical method in model parallelism. Given any binary operator $\otimes$ and a sequence $(z_0,z_1,\cdot\cdot\cdot,z_{n-1})$, it can efficiently compute the sequence $(s_0,s_1,\dots, s_{n-1})=(z_0,(z_0\otimes z_1),\dots,(z_0\otimes z_1\otimes\dots\otimes z_{n-1}))$ in parallel. It leverages multi-processor hardware, such as GPUs, to attain parallel acceleration by employing the divide-and-conquer principle. Fig. \ref{fig: prefix} illustrates the execution flow of the parallel prefix sum algorithm. For a sequence of length $N$, serial calculation necessitates looping through $N$ steps, whereas parallel calculation requires only a logarithmic number $2\lceil log_2N \rceil$ of sequential steps \cite{prefix}. As a result, the time complexity of the method is significantly reduced from $O(N)$ to $O(log_2N)$.

The equation (\ref{eq: parallel}) can be parallelized using the parallel prefix-sum algorithm. The term $A^k x_0$ in (\ref{eq: parallel}) corresponds to a prefix-product problem, where the binary operator $\otimes$ represents matrix multiplication. In this context, $s_i$ in Fig.~\ref{fig: prefix} corresponds to $A^i$ with the fixed coefficient $x_0$. The summation term $\sum_{j=1}^k A^{k-j} f_u(u_j)$ can be regarded as a weighted prefix-sum problem over the index $j \in [1,k]$. This is because it can be recursively decomposed into two parallel sub-problems: $A^{k-j} \sum_{i=1}^jA^{j-i} f_u$ and $\sum_{i=j+1}^kA^{k-i} f_u\left(u_i\right)$, and further split in the same manner. In this case, the input $z_i$ in Fig.~\ref{fig: prefix} corresponds to $A^{k-j} f_u(u_j)$ in (\ref{eq: parallel}), while $s_i$ corresponds to $\sum_{j=1}^k A^{k-j} f_u(u_j)$. It is worth noting that the parallel algorithm achieves its highest efficiency when the splitting index satisfies $j = (i+k)/2$.

\subsection{Complex Neural Dynamics Model}

Based on the structured linear neural dynamics model, this paper proposed a novel physics-informed machine learning framework for PMSM temperature estimation termed as complex neural dynamics models, and Fig. \ref{fig: frame} presents the details of complexNDM. The complexNDM archived embedding the stability and low oscillation into the data-driven structure by a special parameterization and initialization in the complex domain, as detailed in Section \ref{parameter}. In addition, the constraint of the smooth evolution property is achieved by introducing an extra loss term into the loss function. Finally, we have significantly accelerated the complexNDM by diagonalizing the state matrix and the parallel prefix sum algorithm.

\section{Parameterizing Diagonal State Spaces}\label{parameter}

\subsection{Diagonalization}
\begin{figure}
    \centering
    \includegraphics[width=0.9\linewidth]{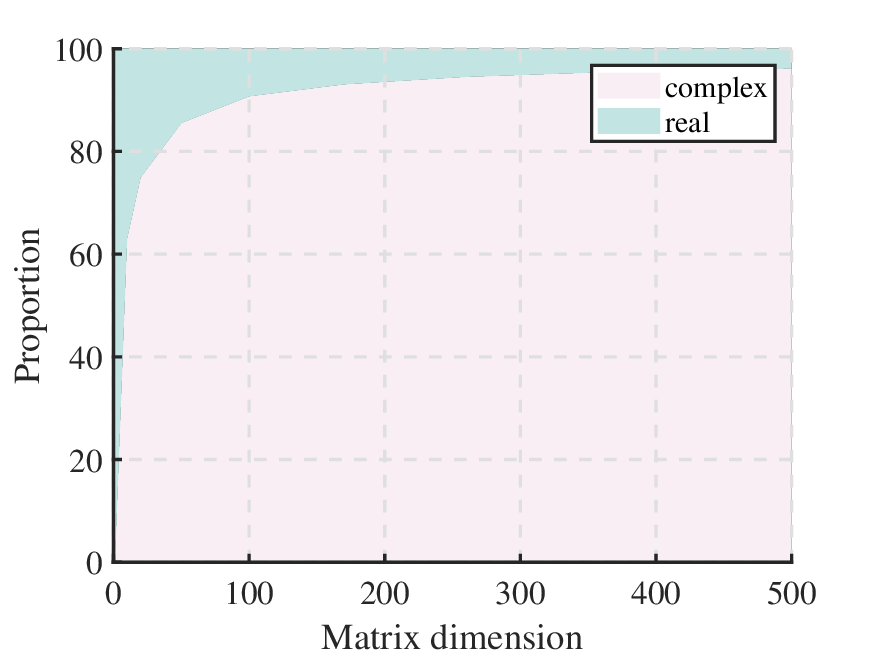}

    \caption{The average proportion of real and complex values in the eigenvalues of a random square matrix.}

    \label{fig: diag}
\end{figure}

The main calculation effort of (\ref{eq: parallel}) comes from the exponentiation of the state matrix $A$. It is naturally conceivable that if $A$ can be transformed into a diagonal matrix, the calculation effort of (\ref{eq: parallel})  will be significantly reduced by altering the matrix multiple to the vector multiple. It is well known that the probability of a real square matrix being diagonalizable in the real number domain decreases as the dimension of the matrix increases. Fig. \ref{fig: diag} illustrates the trend between the proportion of complex numbers in the eigenvalues of matrices with increasing dimension. In other words, as the matrix dimension goes up, the proportion of complex eigenvalues increases to 1. Therefore, it is difficult to complete the diagonalization of a matrix only in the real number domain. However, up to an arbitrarily small perturbation of the entries, every matrix $M \in \mathbb{R}^{n\times n}$ is diagonalizable \cite{axler2024linear}, i.e., one can write $M=P \Lambda P^{-1}$, where $P \in \mathbb{C}^{n \times n}$ is an invertible matrix and $\Lambda = \text{diag}(\bm{\lambda}) \in \mathbb{C}^{n \times n}$. In other words, the set of non-diagonalizable matrices has measure zero, see e.g. Zhinan (2002) \cite{zhinan2002jordan} for a proof idea.

% \begin{theorem}
%     If $\lambda \in \mathbb{C}$ is an eigenvalue of the matrix $M$, then the conjugate $\bar{\lambda}$ is also an eigenvalue of $M$.
%     \label{th: conjugate}
% \end{theorem}
% \begin{proof}
%     Assume $\lambda \in \mathbb{C}$ is an eigenvalue of the matrix $M$, then $|\lambda I-M|=0$. Let $f(\lambda)$ denote the eigenpolynomial $det(\lambda I-M)$, if $f(\lambda)=0$, then $f(\lambda)=\bar{f(\lambda})=f(\bar{\lambda})=0$. Thus, the conjugate of $\lambda$ is also an eigenvalue of matrix $M$, which completes the proof.
% \end{proof}

Assuming that the dimension of the state matrix $A$ is an even number and that all of its eigenvalues are complex numbers. It is important to note that the state represented by $A$ is a hidden state, which does not directly correspond to the actual physical state. The output matrix $C$, on the other hand, establishes a mapping to the actual physical quantities. Therefore, the assumption that the number of states is even does not introduce any significant issues regarding the model's universality. According to repetitive experiments, when the matrix dimension surpasses 20, the proportion of complex numbers in its eigenvalues already exceeds 74\% according to Fig. \ref{fig: diag}, for example. Hence, this assumption above will not decline the model's performance significantly, the final diagonal form is given as

\begin{equation}
\begin{aligned}
    A=P \Lambda P^{-1},\ \Lambda =
\begin{bmatrix}
    \text{diag}(\bm{\lambda}) & \bm{0} \\
    \bm{0} & \text{diag}(\bm{\bar{\lambda}}) \\
\end{bmatrix}.
    \label{eq: finaldiag}
\end{aligned}
\end{equation}
where $\bm{\lambda}$ is the set of unique eigenvalues of the state matrix $A$ and $\bm{\bar{\lambda}}$ is the conjugate of $\bm{\lambda}$. Plugging (\ref{eq: finaldiag}) into (\ref{eq: parallel}), (\ref{eq: parallel}) changed to
\begin{equation}
\begin{aligned}
P^{-1} x_k =\Lambda^k P^{-1} x_0+\sum_{j=1}^k \Lambda^{j-1} P^{-1} f_u\left(u_{k-j}\right).
\end{aligned}
\end{equation}
Letting $\hat{x_i}=P^{-1} x_i$ and $\hat{f_u}=P^{-1} f_u$, we can get 
\begin{equation}
    \hat{x_k} =\Lambda^k \hat{x_0}+\sum_{j=1}^k  \Lambda^{j-1}  \hat{f_u}\left(u_{k-j}\right).
\end{equation} 
After diagonalization, (\ref{eq: dynamics}) changed to
\begin{equation}
\begin{aligned}
\hat{x_0} &=\hat{f_0}\left(\left[y_{1-n} , \ldots , y_0\right]\right), \\
\hat{x_t} &=\Lambda \hat{x_{t-1}}+\hat{f_u}\left(u_t\right), \\
y_t &=\mathcal{R}(\hat{C} \hat{x_t}).
\end{aligned}
\end{equation}
where $\mathcal{R}(\hat{C} \hat{x_t})$ denotes the real part of $\hat{C} \hat{x_t}$ \cite{gupta2022diagonal}, $\hat{C}=CP$, and $\hat{f_0}=P^{-1} f_0$. Note that since $\hat{C} \in \mathbb{C}^{m\times n}$ and the output layer of networks are complex-valued, the projection matrix $P^{-1}$ can be merged into $\hat{C}$ and networks without any extra computations. Assuming the function of the output layer within the $f_0$ and $f_u$ is $y=wx^\text{T}+b$, $P^{-1}(wx^\text{T}+b)=P^{-1}wx^\text{T}+P^{-1}b$. Evidently, when $w\in \mathbb{C}^{n\times m}, b \in \mathbb{C}^n$, the projection matrix $P^{-1}$ can be omitted in the training process.

\subsection{Stability Constraint for Data-driven Pipeline}
The stability of the data-driven pipeline is the key to ensuring model safety in field implementation. The internal state of an unstable model undergoes continual expansion, resulting in the accumulation of errors and consequently leading to elevated estimation inaccuracies. Unstable parameterization will lead to difficulty in model training \cite{deepmind}. Moreover, employing an unstable model as a predictive model within a control system may yield unreliable predictions, resulting in catastrophic failure.

Since $\hat{x_k} =\Lambda^k \hat{x_0}+\sum_{j=1}^k  \Lambda^{j-1}  \hat{f_u}\left(u_{k-j}\right)$, the exponentiation of the diagonal matrix $A$ is the source of instability in model training and inference. The hidden state $x_k$ will explode or vanish exponentially as $k$ increases, and spectral analysis \cite{stable} can better demonstrate this phenomenon.

Assume $\bm{\lambda}$ is the set of eigenvalues of matrix $M$, then the spectral radius of $M$ is $\rho(M) = max(|\bm{\lambda}|)$. A sufficient condition to guarantee the stability of the neural dynamics model is $\rho(\Lambda) < 1$, where $\Lambda$ denotes the state matrix \cite{deepSSM}. For a diagonal matrix, the elements on its diagonal are its eigenvalues. Compared with the stability constraint method for the dense matrix \cite{stable}, the stability constraint method for the diagonal matrix is more direct and intuitive.

Let $u_1, u_2$ be uniform independent random variables in the interval $[0,1]$, $\Phi$ be the upper bound of the phases, and $0<r_{min}<r_{max}<1$. Compute the magnitude $v=-\frac{1}{2}log(u_1(r_{max}^2-r_{min}^2)+r_{min}^2)$ and phase $\theta=\Phi u_2$. As indicated in \cite{deepmind}, $\bm{\lambda}=e^{-v+i\theta}$ will be uniformly distributed on the ring in $\mathbb{C}$ between circles of radii $r_{min}$ and $r_{max}$.

\begin{theorem}   
    Let $\overline{v}=log(v)$, the exponential of a complex number $e^{-v+i\theta}=exp(-exp(\overline{v})+i\theta)$. And $|exp(-exp(\overline{v})+i\theta)|=1$ is achieved at $\overline{v}=-\infty$, while $|exp(-exp(\overline{v})+i\theta)|=0$ is achieved at $\overline{v}=\infty$.
    \label{th: 0-1}
\end{theorem}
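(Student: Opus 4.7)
The plan is to reduce the theorem to two elementary facts: the standard identity $|e^{a+ib}| = e^{a}$ for real $a,b$, and the monotone limits of the exponential function on the extended real line. Because $v > 0$ (it is a magnitude obtained by taking a logarithm of a positive quantity in the sampling scheme), the substitution $\overline{v} = \log v$ is well defined, and $v = e^{\overline{v}}$. Plugging this into the factor $e^{-v+i\theta}$ immediately yields the rewriting $e^{-v+i\theta} = \exp(-\exp(\overline{v}) + i\theta)$, which is the first assertion of the theorem and requires no real work beyond substitution.

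For the magnitude claims, I would first split the complex exponent into its real and imaginary parts: with $a = -\exp(\overline{v}) \in \mathbb{R}$ and $b = \theta \in \mathbb{R}$, I would invoke $|\exp(a+ib)| = |\exp(a)|\cdot|\exp(ib)| = \exp(a)$, where the last equality uses $|\exp(ib)| = |\cos b + i\sin b| = 1$. This gives the clean identity
\begin{equation}
\bigl|\exp(-\exp(\overline{v}) + i\theta)\bigr| \;=\; \exp\!\bigl(-\exp(\overline{v})\bigr),
\end{equation}
in which the phase $\theta$ has disappeared, so the remaining analysis is a statement about a single real variable $\overline{v}$.

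The two extremal values then follow from a short limit computation on the composition $\overline{v} \mapsto \exp(-\exp(\overline{v}))$. As $\overline{v} \to -\infty$, the inner $\exp(\overline{v}) \to 0^{+}$, so the outer exponential tends to $\exp(0) = 1$; as $\overline{v} \to +\infty$, the inner $\exp(\overline{v}) \to +\infty$, so the outer exponential tends to $\exp(-\infty) = 0$. These are attained only as limits, which matches the theorem's phrasing that the extremes occur ``at $\overline{v} = \pm\infty$''; I would note explicitly that the map is a strictly decreasing bijection from $\mathbb{R}$ onto $(0,1)$, so the endpoints $0$ and $1$ are suprema/infima rather than attained values.

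There is no substantive obstacle in this proof; the only delicate point is interpretational, namely being careful that $\overline{v} = \pm\infty$ should be read as limiting behaviour rather than as evaluation at a point. I would make this explicit in one sentence so that the theorem statement is mathematically unambiguous, and then the proof reduces to the two standard facts about the complex exponential and monotone limits stated above.
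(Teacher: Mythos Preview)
Your proposal is correct and follows essentially the same route as the paper: both arguments use Euler's formula to reduce $|e^{-v+i\theta}|$ to $e^{-v}$ (the paper writes out $\sqrt{e^{-2v}\cos^2\theta+e^{-2v}\sin^2\theta}$ explicitly, you factor via $|e^{ib}|=1$), and then read off the two limiting values as $\overline{v}\to\pm\infty$. Your additional remark that the map $\overline{v}\mapsto\exp(-\exp(\overline{v}))$ is a strictly decreasing bijection onto $(0,1)$, so that $0$ and $1$ are only limits, is a welcome clarification that the paper leaves implicit.
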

\begin{proof}
    Since $e^{-v+i\theta}=e^{-v}(cos\theta+i~sin\theta)$, $|e^{-v+i\theta}|=\sqrt{e^{-2v}cos^2\theta+e^{-2v}sin^2\theta}=e^{-v}$, then $e^{-v}=1$ is achieved at $\overline{v}=-\infty$ and $e^{-v}=0$ is achieved at $\overline{v}=\infty$.
\end{proof}

The theorem above gives a way to constrain the eigenvalues of $\Lambda$ to $[0,1]$, and close to 1. In detail, $\overline{v}$ and $\theta$ are the learnable parameters to calculate the unique eigenvalues $\bm{\lambda}$ of the matrix $\Lambda$ (It's well-known that the remaining eigenvalues are the conjugate complex numbers of $\bm{\lambda}$). According to Theorem \ref{th: 0-1}, regardless of the values of $\overline{v}$ and $\theta$, $\bm{\lambda}$ always falls within the range of 0 to 1. Moreover, to make the state matrix $\bm{\lambda}$ align more closely with physical principles, it is imperative to initialize the values of $r_{min}$ and $r_{max}$ reasonably. From a thermodynamics perspective \cite{building}, for example, $\rho(\Lambda)$ is loosely related to the overall heat transfer coefficient of the system. In detail, \cite{building} sets the upper and lower bounds of $\rho(\Lambda)$ to $[0.8,1.0]$ for stability and low dissipativity of learned dynamics. While this correlation requires expert knowledge, it is more efficient and convenient to determine the initialization values from the standpoint of RNNs \cite{deepmind}, aiming for a spectral radius close to 1. Fig. \ref{fig: init} displays the distribution of $32$ eigenvalues randomly generated on the complex plane when $r_{min}$ is set to 0.4, $r_{max}$ to 0.8, and $\Phi$ to $\pi/4$.

\begin{figure}[t]
    \centering
    \includegraphics[width=\linewidth]{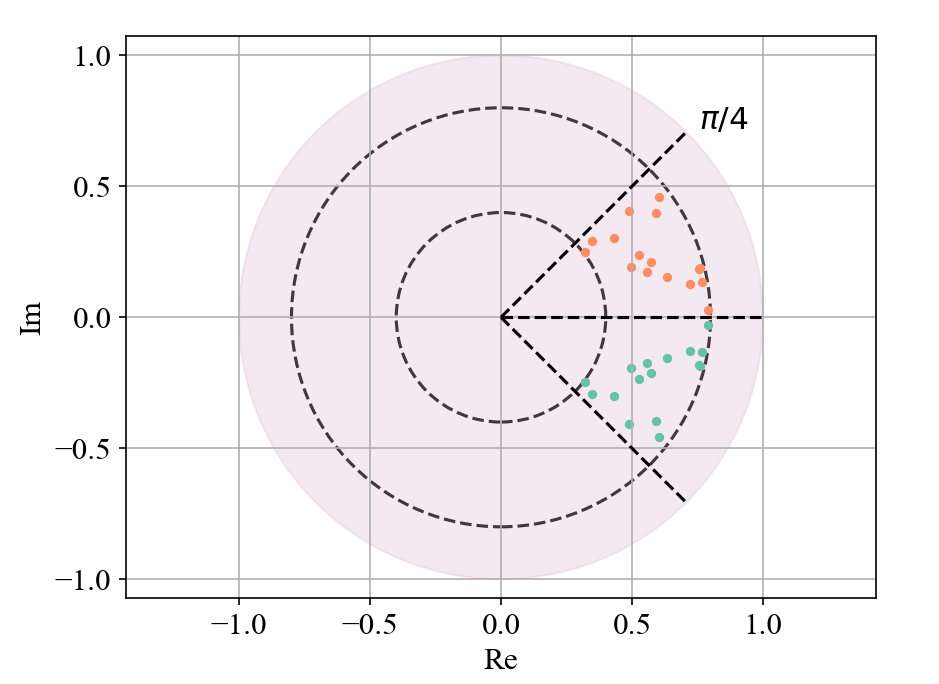}
    \caption{The distribution of randomly generated $\bm{\lambda} \in \mathbb{C}^{32}$ on the complex plane when $r_{min}=0.4$, $r_{max}=0.8$, and $\Phi=\pi/4$. Note that the eigenvalue with a negative imaginary part is the complex conjugate of an eigenvalue with a positive imaginary part.}
    \label{fig: init}
\end{figure}
\subsection{System Oscillation Frequency}
It is well-known that the imaginary component of the eigenvalue of the state matrix in the SSM corresponds to the system's periodic oscillation frequency. However, this physical prior is usually ignored. For example, in a system thermal model represented by a first-order RC-network (combination of thermal resistances and capacitors), the oscillatory behavior of the system cannot be displayed in principle. Introducing the oscillatory behavior of the system into the data-driven model not only makes the data-driven model more consistent with physical reality but also broadens the scope of application of this method. 

In most cases, when a system is primarily influenced by control input, its periodic oscillation frequency tends to be small \cite{building}. Consequently, the imaginary component of the eigenvalues associated with the state matrix should also be restricted. Considering the PMSM as an example, its temperature is predominantly influenced by external inputs rather than exhibiting periodic changes. Uniformly initializing the eigenvalue phase of the system thermal dynamics state matrix inherently biases the network towards learning spurious features in the input sequence. A state matrix with large phases will lead to the state oscillating violently in the complex plane during the training process, which will push the model to focus on capturing the mean result of local oscillation patterns \cite{deepmind}. This phenomenon also aligns with the frequency principle (F-Principle) \cite{xu2020frequency} observed in deep neural networks (DNNs), which indicates that DNNs often fit target functions from low to high frequencies during the training process. This bias revealed the reason why initializing models with a high oscillation frequency, i.e., a large phase, can result in suboptimal performance. Based on the analysis above, it is essential to embed this prior bias when initializing the model. Specifically, it is achieved by restricting the range of $\theta$ to a thin slice around 0. As the optimal values of $\theta$ are close to 0, we optimize the logarithmic phase $\overline{\theta}=log(\theta)$ instead of optimizing $\theta$ directly.

\subsection{Smooth Evolution Regulation}
It is well-known that the evolution of a non-chaotic system is smooth, implying that a huge difference between hidden states in consecutive moments is unrealistic. With this physical prior knowledge, the difference in hidden states between consecutive time steps can be utilized to regularize the model so that the system state transition is smooth. The new loss function can be expressed as
\begin{equation}
    \begin{aligned}
        \ell_{\rm total}&=\ell_{\rm inf}+Q\cdot \frac{\ell_{\rm inf}^{\rm detached}}{\ell_{\rm smth}^{\rm detached}}\cdot \ell_{\rm smth},\\
        &=\ell(\hat{y},y)+Q\cdot \frac{\ell_{\rm inf}^{\rm detached}}{\ell_{\rm smth}^{\rm detached}}\cdot\frac{1}{n}\sum _{t=1}^n \ell(|x_{t}-x_{t-1}|,\bm{0}).
    \end{aligned}
\end{equation}
where $\ell_{\rm inf}$ is the inference loss, $\ell_{\rm smth}$ is the smooth loss, $\hat{y}$ is the estimated value, $y$ is the real value, $x_t$ is the hidden state of the system at time $t$, $x_{t-1}$ is the hidden state of the system at time $t-1$, $\ell$ is an arbitrary loss function, and $Q$ is the weight coefficient of the smooth loss term after unified scale. Note that $\rm detached$ means stop computing gradients, enabling unifying the scale of two losses without disturbing the backpropagation.

\section{Analysis of the Initialization}\label{anal}

Geometrically, the magnitude of the state matrix eigenvalues signifies the extent of expansion or contraction within the state. When the magnitude is below $1$, it indicates a contraction in the state, signifying a gradual convergence toward a stable state. It is consistent with Lyapunov stability. Any initialized state will converge toward the attractor, and states farther away from the attractor will converge faster. This phenomenon aligns with the Lyapunov global asymptotic exponential stability \cite{lyapunov}, and its theoretical proof is intuitively evident due to the state matrix being diagonal. It is clear that $|S_i*\Lambda_{ii}*...*\Lambda_{ii}|=|S_i|*|\Lambda_{ii}|*...*|\Lambda_{ii}|=|S_i|*|\Lambda_{ii}|^t$, where $S_i$ is the $i$-th state in the hidden states, and $\Lambda_{ii}$ is the corresponding state transition coefficient in state matrix $\Lambda$. Without accounting for the system input, since arbitrary $|\Lambda_{ii}|<1$ in terms of any initial value of $S_i$, it will eventually converge to the origin after a sufficient number of state transitions. The complexNDM not only fulfills numerical stability but also satisfies global asymptotic exponential stability from the perspective of dynamic systems.

\begin{figure}[t]
  \centering
    \includegraphics[width=\linewidth]{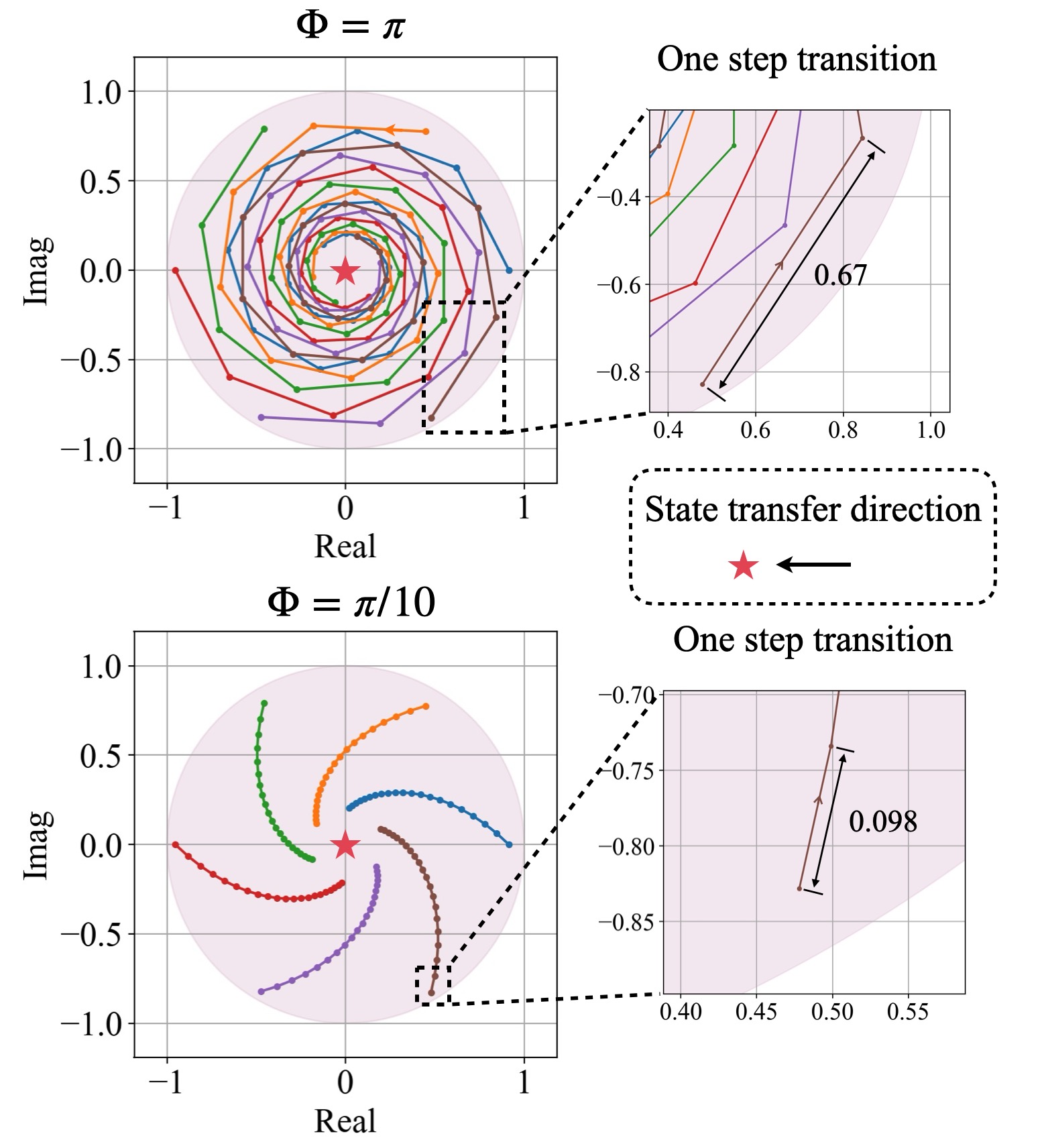}
    \caption{The upper figure is the state transition trajectories without system input when $\Phi=\pi$ and the lower figure is the same state transition trajectories without system input when $\Phi=\pi/ 10$.}
    \label{fig: trans}
\end{figure}

Moreover, the phase angle of the state matrix's eigenvalue represents the rotation angle of the state transition from a geometrical perspective. A smaller phase angle implies the state transition has a longer cyclical period and a slower rotational pace. Therefore, the above constraints on the magnitude approaching $1$ and the phase angle approaching $0$ are also the emphasis on the smooth evolution characteristics of the non-chaotic system. The complex multiplication formula under polar coordinate axes can clearly show this case as:
\begin{equation}
    (r_1e^{i\cdot\theta_1}) \cdot (r_2e^{i\cdot\theta_2}) = (r_1\cdot r_2)\cdot e^{i\cdot (\theta_1+\theta_2)},
\label{eq: transition}
\end{equation}
where $r_1$ is the magnitude of one item in the state matrix, $\theta_1$ is the corresponding phase, $r_2$ is the magnitude of the corresponding state, and $\theta_2$ is the phase of this state. Eq. (\ref{eq: transition}) denotes a state transition step, clearly demonstrating the above. The disparities in state transition trajectories under different phases are visually apparent in Fig. \ref{fig: trans}, offering an intuitive understanding of the phase constraint. Even when magnitudes are comparable, the phase will have a significant impact on the step size of each state transition. A smaller phase results in a smoother transition of states.

\section{Experiment and Implementation Analysis}\label{exp}
\subsection{Hardware setup and dataset}

The performance of the proposed method is demonstrated by using a PMSM temperature test bench, as shown in Fig. \ref{fig:test_bench}. The hardware setup for motor temperature data collection can be found in \cite{LPTN}. The dataset encompasses 185 h of multi-sensor data sampled at 2 Hz from a three-phase automotive traction PMSM rated at 52 kW, installed on a test bench. The motor operates under torque control, while its speed is regulated by a speed-controlled load motor rated at 210 kW and fed by a two-level IGBT inverter (Semikron: 3xSKiiP 1242GB120-4DW). The test torque ranges from -240 Nm to 260 Nm, and the motor speed ranges from 0 rpm to 6000 rpm. All measurements were captured by dSPACE analog-digital converters, synchronized with the control task. To ensure comparative analysis, this paper designates data from \textit{profile\_id}=58 as the validation set and data from \textit{profile\_id}s 65 and 72 as the test set, adhering to the specifications outlined in \cite{motor}. Table \ref{tab: dataset} shows the details of the dataset. All experiment results were completed on an Ubuntu server with JAX, an Intel(R) Xeon(R) Gold 6348 CPU @ 2.60GHz, and an NVIDIA A800 GPU.

\begin{figure}
    \centering
    \includegraphics[width=1\linewidth]{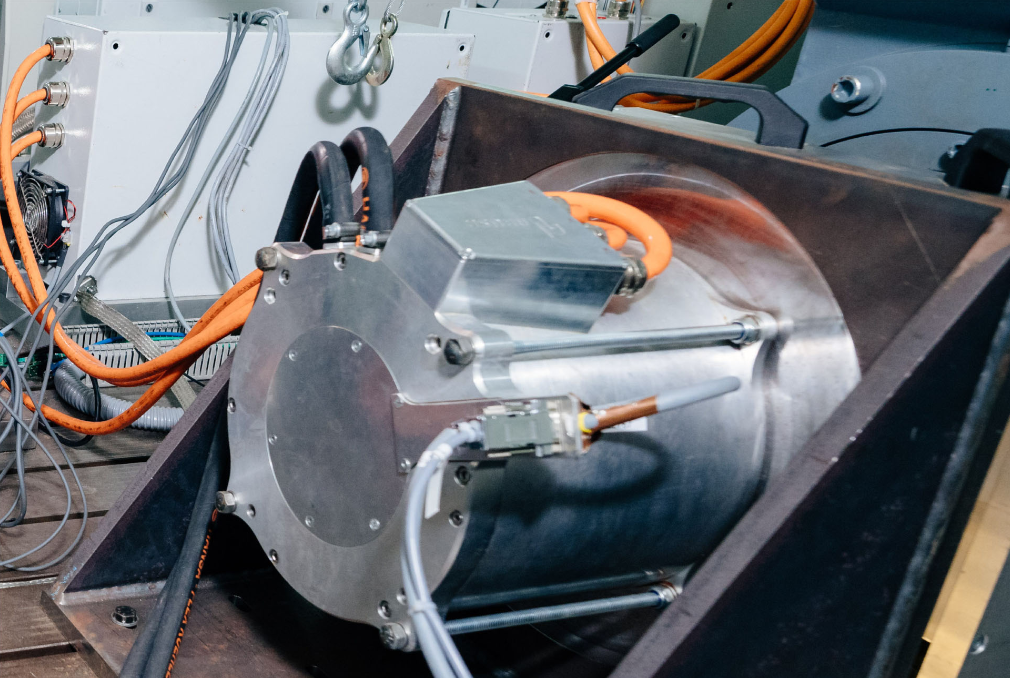}
    \caption{Test bench with an exemplary PMSM for data generation \cite{motor}.}
    \label{fig:test_bench}
\end{figure}

\begin{table}[t]
    \centering
    \caption{Dataset details}
    \begin{tabular}{l c|l c}
         \toprule
         Parameter&Symbol&Parameter&Symbol  \\
         \midrule
         \multicolumn{2}{c|}{\textbf{Measured Inputs}}&\multicolumn{2}{c}{\textbf{Derived Inputs}} \\
         Ambient Temperature& $\theta_a$&Voltage Magnitude&$u_s$ \\
         Liquid Coolant tempearture&$\theta_c$&Current Magnitude&$i_s$ \\
         Actual Voltage $d/q$-axes&$u_d,u_q$&\multicolumn{2}{c}{\textbf{Measured Target}} \\
         Actual Current $d/q$-axes&$i_d,i_q$&Permanent Magnet&$\theta_{PM}$ \\
         Torque&$T$&Stator Teeth&$\theta_{ST}$ \\
         Motor Speed&$S$&Stator Winding&$\theta_{SW}$ \\
         &&Stator Yoke&$\theta_{SY}$ \\
         \bottomrule
    \end{tabular}
    \label{tab: dataset}
\end{table}
\subsection{Data Preprocessing}
In the case of a non-chaotic system characterized by slow system state changes, a sampling frequency of 2 Hz holds little significance. Therefore, we downsample the data to a sampling frequency of 0.25 Hz. This action serves to alleviate the computational workload while facilitating the discernment of evolving data patterns. Specifically, data within each downsampled window is averaged to produce the new dataset. Moreover, the model training benefits from faster convergence when utilizing normalized data. This paper normalized temperature data and other data, respectively. All temperature data is only divided by 100 $^\circ C$ for normalization. As for other sensor data, division by the maximum absolute value is performed within the dataset for normalization.

% \begin{figure}
%     \centering
%     \includegraphics[width=\linewidth]{hyper.jpg}
%     \caption{The trend of the performance of the complexNDM under two initializations with the hidden state space size increasing.}
%     \label{fig: hidden}
% \end{figure}

\subsection{Hyperparameter Optimization and Ablation Study}
The training strategy adopts the linear warmup followed by a cosine decay scheme, where the initial learning rate (lr) is 1e-7, peak lr is 2e-4, end lr is 1e-7, and warm-up is in the first 10\% of training steps. The number of training epochs is selected as 300. For more robust training, Smooth$L_1$Loss \cite{l1smooth} is selected as the loss function during the training process. The prediction length is the length of measurement data input to the $f_0$, and the estimation length is the recurrent steps of the complexNDM, in other words, the estimation length is the length of the control input. The estimation length is set equally to \cite{liao2025neural} for comparison as 128. $f_0$ and $f_u$ are neural network modules that can be chosen freely. This paper chooses the basic multi-layer perceptron (MLP) with a complex-valued output layer as the backbone of the models, and the self-gated (Swish) \cite{swish} function is used to activate them. \textit{Hidden Layers} is the number of hidden layers of MLPs, and \textit{Hidden State Size} refers to the number of neurons in the hidden layer of MLPs. $f_0$ and $f_u$ both contain 2 hidden layers, and the hidden state size is designed as 32.

\begin{table}[h]
    \centering
    \caption{RMSE with different magnitude and phase}
    \begin{tabular}{ccccc}
        \toprule
        \multirow{2}{*}{Phase $\Phi$} & \multicolumn{4}{c}{Magnitude $[r_{min}, r_{max}]$} \\
        \cmidrule(lr){2-5}
        & $[0.0, 0.5]$&$[0.0, 1.0]$&$[0.5, 1.0 ]$&$[0.9, 1.0]$  \\
        \midrule
        $0.1\cdot \pi$ & $5.69\pm0.21$ & $1.77\pm0.21$ & $1.28\pm0.13$ & \bm{$1.03\pm0.04$}   \\
        $0.5\cdot \pi$ & $6.16\pm0.49$ & $3.89\pm1.30$ & $3.68\pm1.43$ & $2.37\pm0.95$   \\
        $1.0 \cdot \pi$ & $6.79\pm1.07$ & $4.82\pm1.63$ & $4.31\pm1.69$ &  $3.25\pm1.48$ \\
        \bottomrule
    \end{tabular}
    \label{tab: hpo}
\end{table}

The initialization of the range of magnitude and phase are the key hyperparameters in this work. We have thoroughly discussed the impact of different initialization on model performance, and the experimental results are shown in Table \ref{tab: hpo}. The experimental results reveal that the model exhibits the poorest performance when $r_{max}=0.5$, indicating that eigenvalue magnitudes below this threshold fail to satisfy the model's convergence criteria. In contrast, when $r_{max}=1.0$, we observe a progressive enhancement in model performance (both in terms of mean and standard deviation) as $r_{min}$ increases. This suggests that optimal convergence is achieved when all eigenvalues maintain magnitudes below yet approaching 1.0. Furthermore, the results demonstrate that the phase component of eigenvalues significantly influences model performance, with expanding phase ranges consistently leading to performance degradation (evident in both mean and standard deviation metrics). These empirical findings are in complete alignment with our theoretical analysis of eigenvalue magnitude and phase characteristics presented in the preceding analysis. As a result, we choose the range of magnitude $[r_{min},r_{max}]$ as $[0.9,1.0]$, and the upper bound of phase as $0.1\cdot\pi$.

\begin{figure}[h]
    \centering
    \includegraphics[width=\linewidth]{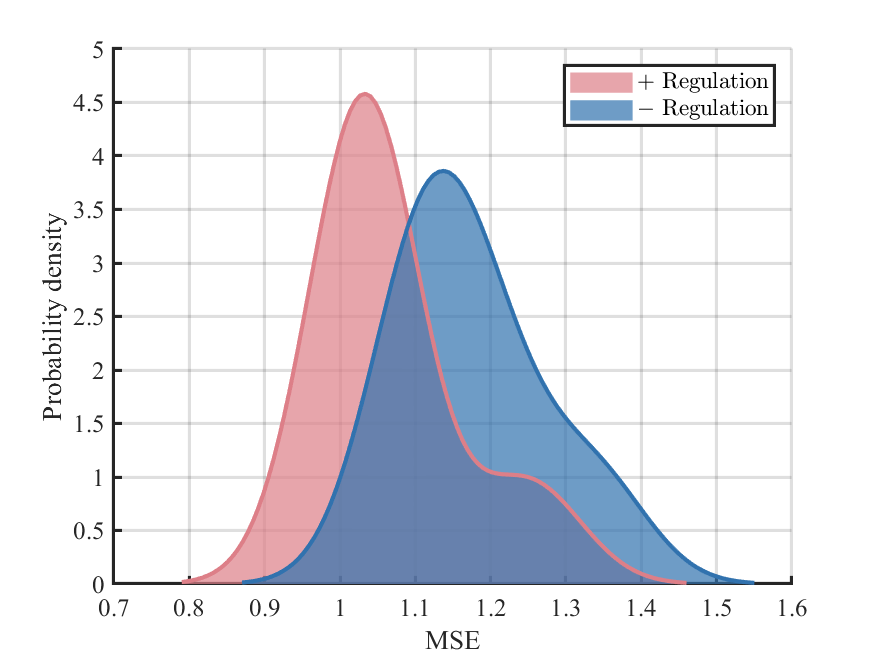}
    \caption{Ablation study for the Smooth Evolution Regulation.}
    
    \label{fig: ablation}
\end{figure}

The principle behind selecting the loss coefficient $Q$ is to maintain a smooth loss term smaller than the inference loss. This approach ensures that the smooth loss functions as a regularization factor will not disrupt the primary training trajectory of the model. This paper also conducted an ablation study about smooth evolution regulation by setting the loss coefficient $Q$ as 0.0 and 0.1. Fig. \ref{fig: ablation} presents the probability distribution curves of MSE obtained via kernel density estimation (KDE) for the results of 6 independent experiments using different random seeds. The estimation performance of complexNDM with smooth evolution regulation surpasses that without it in terms of both mean and variance. Specifically, the mean and variance for complexNDM with smooth evolution regulation are 1.07 and 0.0064, respectively, compared to 1.18 and 0.0081 without regulation. These results demonstrate that the complexNDM constrained by the smooth evolution regulation provides more stable, accurate, and reliable estimations under random scenarios.

The final model structure hyperparameters and training process hyperparameters were optimized via grid search methodology and are detailed in Table \ref{tab: hyper}.

\begin{table}[t]
    \centering
    \caption{Hyperparameters of the complexNDM}
    
    \begin{tabular}{lc|lc}
    \toprule
         Structural Parameters&Value&Training Parameters&Value  \\
         \midrule
         Estimation Length&$128$&Batch Size&$1024$ \\
         Prediction Length&$8$&Training Epochs& $300$\\
         Hidden State Size&$32$&Phase Upper Bound $\Phi$&$0.1\cdot\pi$\\
         Hidden Layers&$2$&$[r_{min}, r_{max}]$&$[0.9, 1.0]$\\ 
         Activation& Swish &Loss Coefficient ($Q$)&$0.1$ \\
         \bottomrule
    \end{tabular}
    \label{tab: hyper}
\end{table}

\begin{figure*}[t]
  \centering
  \includegraphics[width=\linewidth]{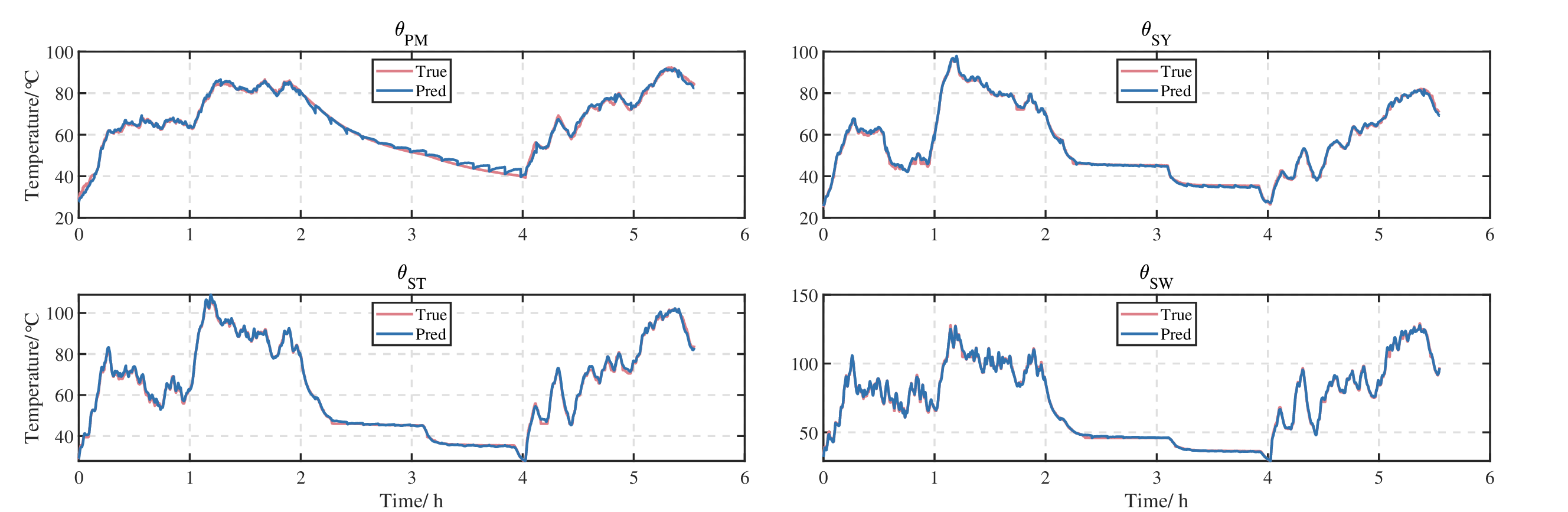}
  \caption{The comparison of the estimation and the ground truth of the temperature by complexNDM at $profile\_{id}=65$ in the test set.}
  \label{fig: pred}
\end{figure*}

\begin{table*}[h]
    \centering
    \caption{The performance and model size of existing methods}
    \begin{tabular}{p{5.5cm}|p{2.2cm}p{2.2cm}p{1.5cm}r}
    \toprule
         Method & RMSE ($K$) & MSE ($K^2$) & $\ell_{\infty}$ ($K$) & Parameters \\
    \midrule
         MLP (Kirchgässner et al. 2021) \cite{benchmark} & $2.36$ & $5.58$ & $14.3$ & $1{,}380$ \\
         OLS (Kirchgässner et al. 2021) \cite{benchmark} & $2.11$ & $4.47$ & $9.85$ & $328$ \\
         CNN (Kirchgässner et al. 2020) \cite{TNN} & $2.10$ & $4.43$ & $15.5$ & $4{,}916$ \\
         LTPN (Wallscheid et al. 2016) \cite{LPTN} & $1.91$ & $3.64$ & $7.37$ & $\bm{34}$ \\
         TNN (Small) (Kirchgässner et al. 2023) \cite{TNN} & $1.78$ & $3.18$ & $5.84$ & $64$ \\
         TNN (HPO) (Kirchgässner et al. 2023) \cite{TNN} & $1.69$ & $2.87$ & $6.02$ & $1{,}525$ \\
         RNN (Kirchgässner et al. 2021) \cite{motor} & $1.74$ & $3.02$ & $9.10$ & $>850k$ \\
         TCN  (Kirchgässner et al. 2021) \cite{motor} & $1.31$ & $1.72$ & $7.04$ & $>320k$ \\
         pfNDM (Liao et al. 2025) \cite{liao2025neural} & $\bm{0.95}$ & $\bm{0.91}$ & $6.59$ & $15.2k$ \\
    \midrule
         \textbf{complexNDM (This Work)} & $1.03\pm0.04$ & $1.07\pm0.08$ & $\bm{5.63\pm0.63}$ & $8{,}032$ \\
    \bottomrule
    \end{tabular}
    \label{tab: comp}
\end{table*}

\subsection{Performance Analysis} \label{sct: performance}

During testing, a sliding-window technique is employed to estimate temperatures across the monitoring duration. This window moves by the estimation length at each step. Fig. \ref{fig: pred} displays the prediction outcomes for the session with $profile\_id$=65 in the test set. The results visually demonstrate the remarkable prediction accuracy achieved by complexNDM. In addition, it can be seen that when the motor temperature drops evenly in the 2nd to 4th hour, the estimated value of the permanent magnet temperature has some slight jitter compared with the ground truth, while the estimated value of the stator temperature is consistent with the ground truth. This may be caused by the inconsistency of the thermodynamic characteristics of the stator and rotor.

\begin{figure}[t]
  \centering
    \includegraphics[width=\linewidth]{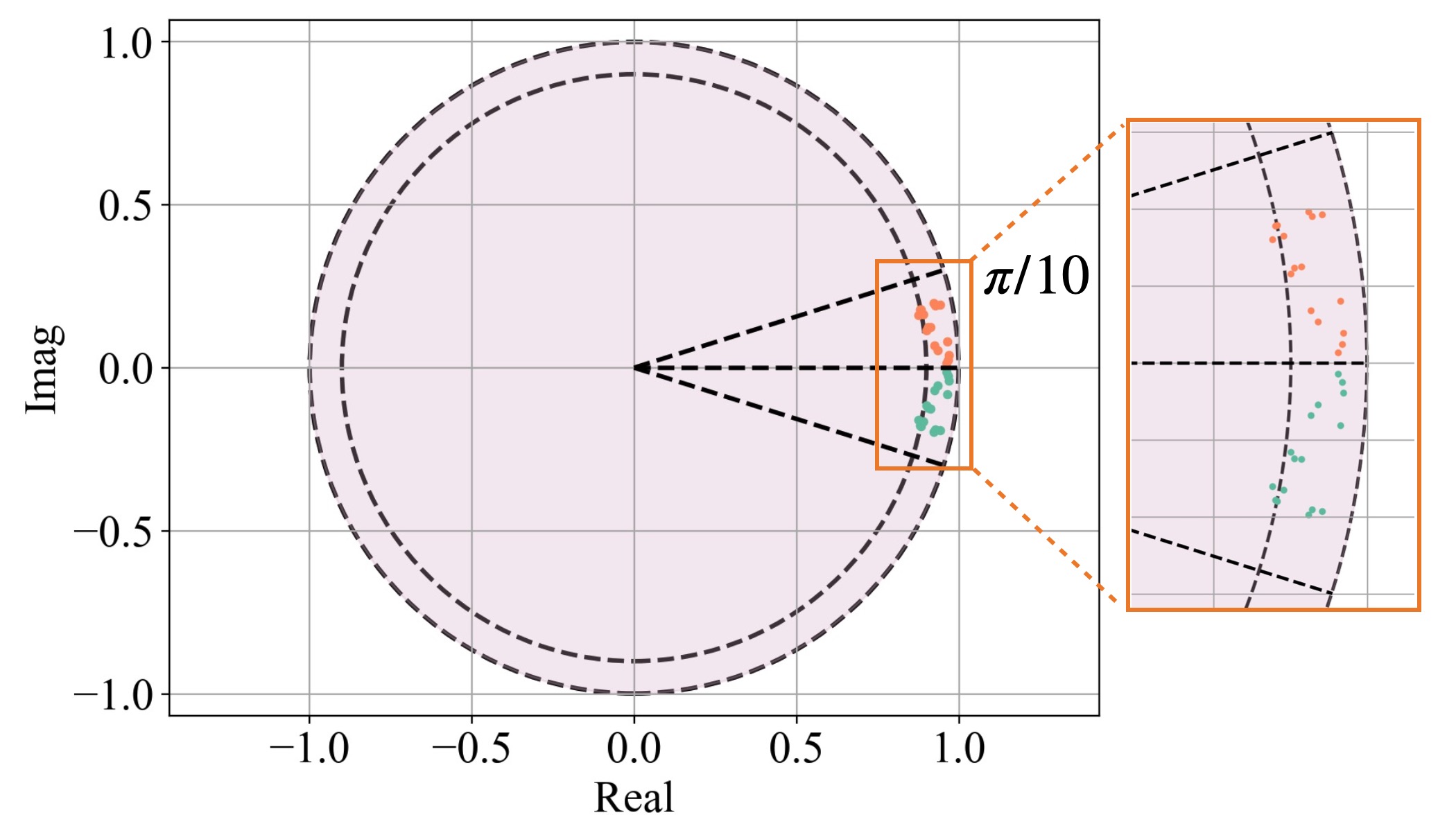}

    \caption{The distribution of eigenvalues within the trained complexNDM, which is initialized by $[r_{min},r_{max}]=[0.9,1.0]$ and $0.1\cdot\pi$.}

    \label{fig: eigen}
\end{figure}

This study quantifies the estimation performance of complexNDM and juxtaposes it with other existing methods, as detailed in Table \ref{tab: comp}. Statistical indicators such as MSE and RMSE show that its performance is only behind pfNDM\cite{liao2025neural}, but with nearly half the model size while having similar performance. And this method has fewer model parameters than other DNN-based methods. Moreover, due to the linear decoupling and diagonalization characteristics of complexNDM, most of the parameters are concentrated in the control input mapping network $f_u$, and the state matrix representing the main dynamic modes of the hidden state actually contains 32 parameters, which provides a basis for future research on system dynamics through model topology. Furthermore, complexNDM demonstrates superior performance in terms of the error's infinite norm. This advantage stems from the phase prior imposed on the model. A smoother state evolution enhances the model's capability to effectively manage extreme operating conditions. While the average error is important for estimation accuracy, the significance of the infinite norm surpasses it. A smaller maximum error value signifies the model's enhanced capability to manage extreme operating conditions effectively, thus promoting more efficient thermal management.

\subsection{Eigenvalue Analysis}

Fig. \ref{fig: eigen} illustrates the distribution of eigenvalues of the trained network on the complex plane. After training, the eigenvalue magnitudes are all less than 1, and the phases, which are close to 0, indicate that complexNDM is a stable model consistent with the characteristics of dynamics systems. Additionally, it is observable that the network's eigenvalues do not significantly deviate from the bias introduced during initialization. This implies that the prior information embedded during initialization persists throughout the network training process.

Furthermore, from the perspective of the dynamic mode, we can find that the model structure of complexNDM is efficient \cite{building}. The magnitude of all eigenvalues of complexNDM approximates 1, indicating each decoupled state is associated with an important dynamic mode \cite{dynamicmode}. This eigenvalue distribution shows that complexNDM achieved an efficient low-dimensional presentation of a dynamics system by diagonal decomposition.

\subsection{Hardware Acceleration}

\begin{figure}[t]
  \centering
  \includegraphics[width=\linewidth]{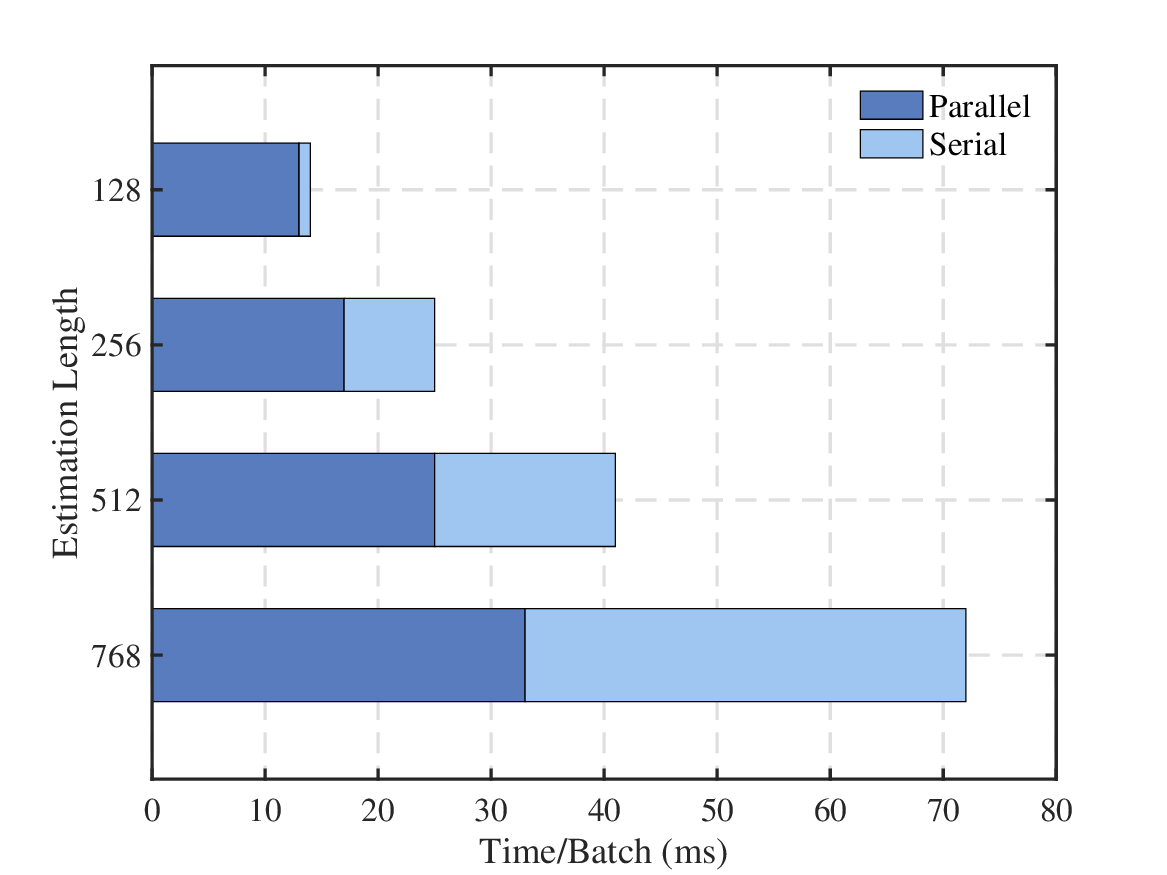}

  \caption{The training speed comparison between complexNDM parallel computing and serial computing on the experimental platform comprising an Intel(R) Xeon(R) Gold 6348 CPU @ 2.60GHz and an NVIDIA A800 GPU.}
 
  \label{fig: para}
\end{figure}

The parallelizable complexNDM can be accelerated using parallel processing units, such as GPUs and Tensor Processing Units (TPUs). Extensive research has been conducted on deploying parallel prefix sum algorithms on CUDA \cite{cuda}. However, custom CUDA operators may appear overly complex for industrial practitioners lacking familiarity with GPU principles and CUDA programming technology. To enhance method universality, we utilize JAX's \textit{jax.lax.associative\_scan} and custom binary operators to achieve parallel acceleration directly.

To validate the parallel acceleration effect, we adhered to the hyperparameters outlined in Table \ref{tab: hyper}. As shown in Fig. \ref{fig: para}, in benchmark experiments with an estimation length of 128, the training time for each batch computes in parallel (13ms) and serial (14ms) are tough equal. As the estimation length gradually increases, the acceleration effect becomes increasingly pronounced. When the estimation length reaches 768, the training speed is accelerated up to 2.2 times (72ms to 33ms). Therefore, the proposed method can be significantly accelerated with parallelization, which is favorable for large-scale and long-term prediction applications or online learning on the edge end (such as NVIDIA Jetson Orin Nano series) for real-time adaptive condition monitoring.

\section{Conclusion}\label{cons}
This paper proposed a parallelizable complex neural dynamics model for estimating PMSM temperature. This model is data-driven and based on SSMs. The parameterization of the state matrix in the complex domain enables a priori embedding of the stability and system oscillation frequency. Furthermore, we also bridge these physical priors above and the property of smooth evolution of the non-chaotic system. In addition, the diagonal decomposition of state matrices within the complex domain and the parallel prefix sum algorithm enable efficient hardware acceleration for complexNDM, achieving up to 2.2x acceleration in the training process. It validates the superiority of model performance using a real electric motor temperature experiment, which achieves an average RMSE of less than 1 $K$ with a compact model size.

\bibliographystyle{ieeetr}
\bibliography{ref}
\vspace{-1cm}
\begin{IEEEbiography}
[{\includegraphics[width=1in,height=1.25in,clip,keepaspectratio]{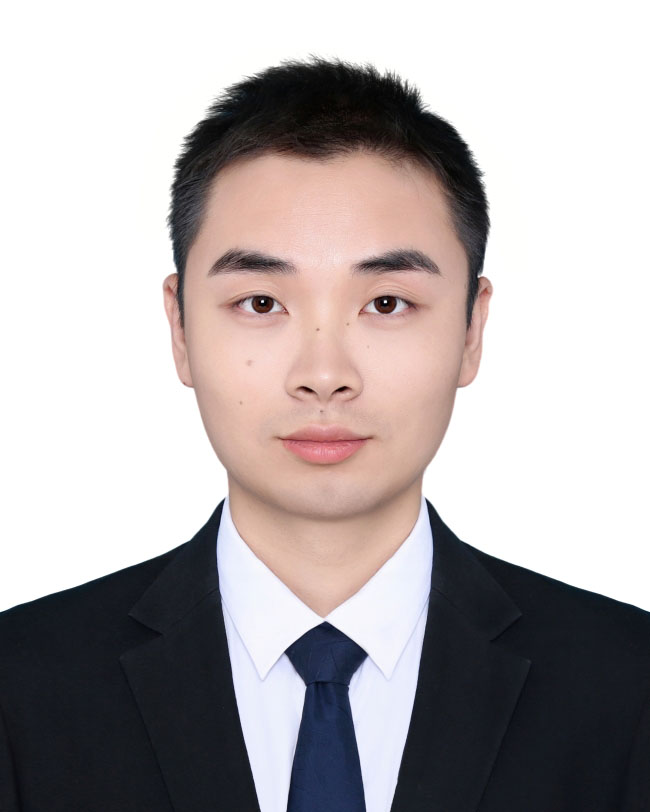}}]
{Xinyuan Liao} (S'24) received a B.Eng. degree in Computer Science and Technology from Ocean University of China in 2022 and a M.Eng. degree in Information and Communication Engineering from Northwestern Polytechnical University in 2025. He is currently working towards a PhD degree at the Department of Electrical and Electronic Engineering, The Hong Kong Polytechnic University, Hong Kong.

He was the recipient of the National Scholarship for Postgraduate at Northwestern Polytechnical University in 2024 and holds 5 Chinese invention patents. His current research interests include physics-informed machine learning and power electronics condition monitoring.
\end{IEEEbiography}

\vspace{-1cm}
\begin{IEEEbiography}[{\includegraphics[width=1in,height=1.25in,clip,keepaspectratio]{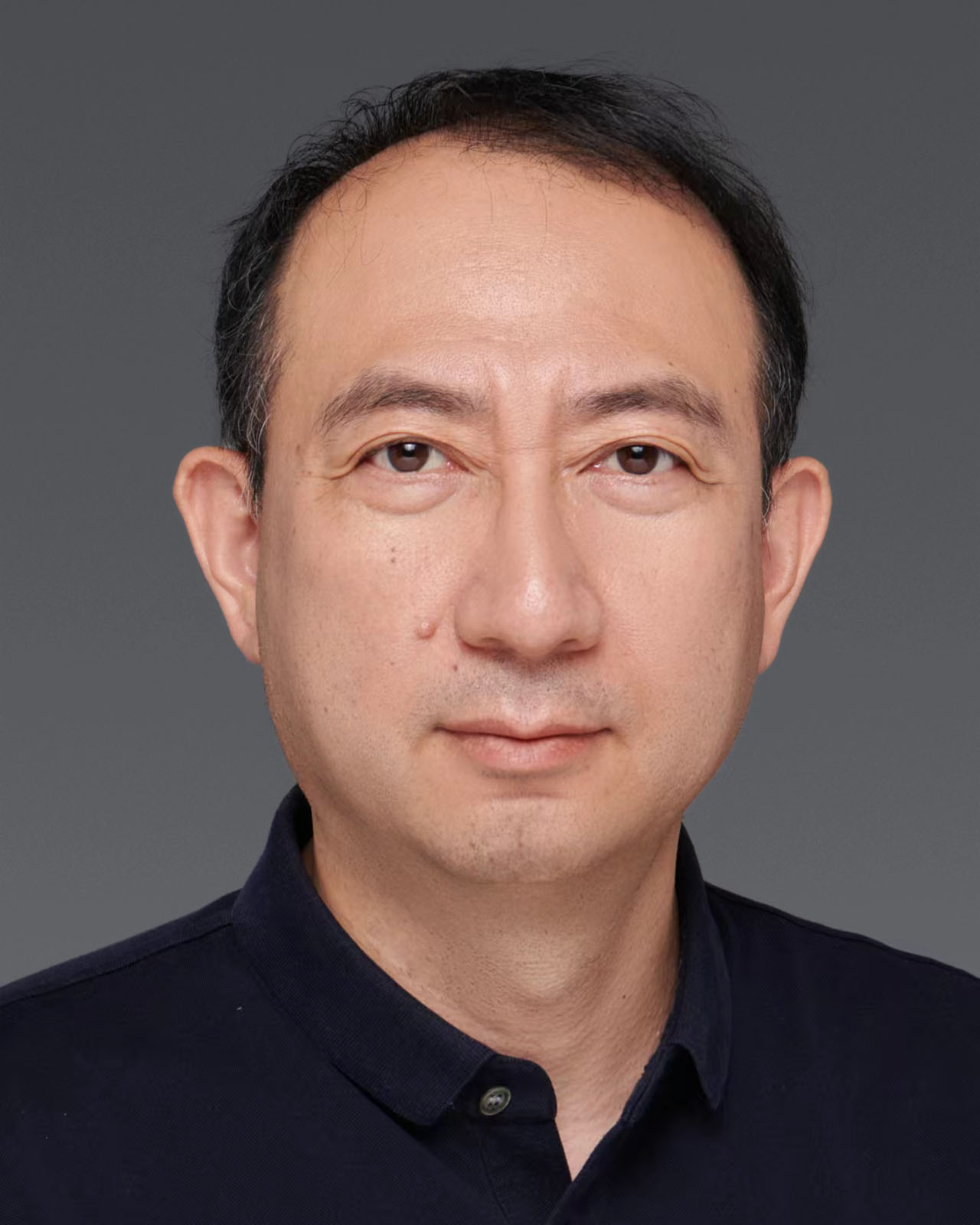}}]
{Shaowei Chen} (M'15) is currently an Associate Professor at the School of Electronics and Information, Northwestern Polytechnical University, Xi'an, China, where he is also the dean of the Department of Telecommunication Engineering and the Director of Perception and IoT Information Processing Laboratory. He is the principal investigator of several projects supported by the Aeronautical Science Foundation of China, Beijing, China. 

His research expertise is in the area of fault diagnosis, sensors, condition monitoring, and prognosis of electronic systems. Prof. Chen has been selected to receive several Provincial and Ministerial Science and Technology Awards. He is a senior member of the Chinese Institute of Electronics.
\end{IEEEbiography}

\vspace{-1cm}
\begin{IEEEbiography}[{\includegraphics[width=1in,height=1.25in,clip,keepaspectratio]{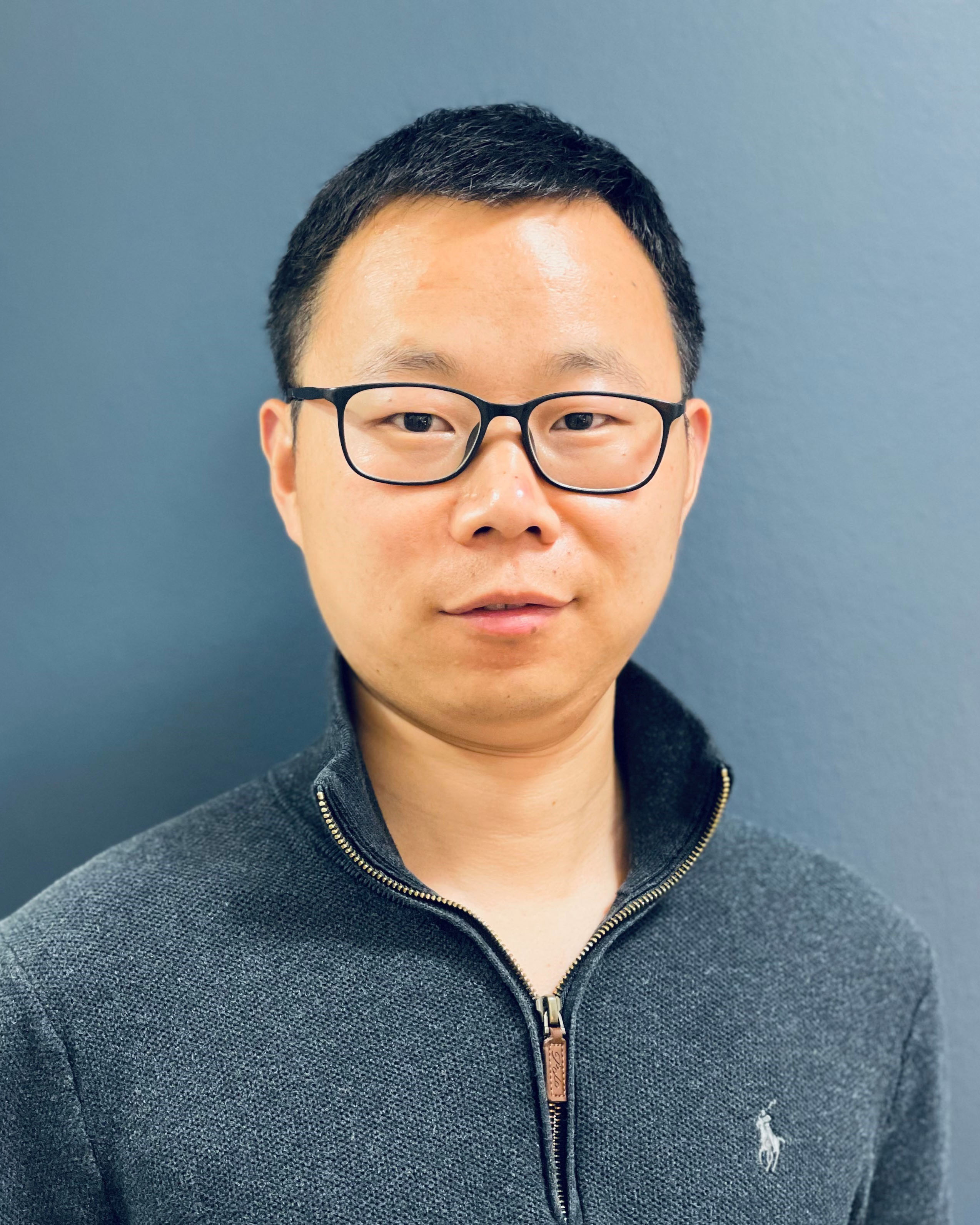}}]
{Shuai Zhao} (S'14-M'18-SM'24) received B.S. (Hons.), M.S., and Ph.D. degrees in Information and Telecommunication Engineering from Northwestern Polytechnical University, Xi'an, China, in 2011, 2014, and 2018, respectively. He is currently an Assistant Professor with AAU Energy, Aalborg University, Denmark.

From 2014 to 2016, he was a Visiting Ph.D. student at the University of Toronto, Canada. In August 2018, he was a Visiting Scholar with the University of Texas at Dallas, USA. From 2018 to 2022, he was a Postdoc researcher with AAU Energy, Aalborg University, Denmark. He is the Associate Editor of IEEE Transactions on Vehicle Technology and Guest Editor of IEEE Journal of Emerging and Selected Topics in Industrial Electronics and Elsevier e-Prime. His research interests include physics-informed machine learning, system informatics, condition monitoring, diagnostics \& prognostics, and tailored AI tools for power electronic systems.

\end{IEEEbiography}

\end{document}